\newtheorem{theorem}{Theorem}[section]
\newtheorem{proposition}[theorem]{Proposition}
\newtheorem{corollary}[theorem]{Corollary}
\theoremstyle{definition}
\newcommand{\mbf}{\mathbf}
\newcommand{\mbb}{\mathbb}
\newcommand{\mc}{\mathcal}
\newcommand{\tr}{\textrm{Tr}}
\newcommand{\wt}{\widetilde}
\newcommand{\ket}[1]{|#1\rangle}
\newcommand{\op}[2]{|#1\rangle\langle #2|}
\newcommand{\ip}[2]{\langle #1| #2\rangle}
\definecolor{cool_green}{rgb}{0.0, 0.5, 0.0}
\begin{document}

\title{Entanglement of assistance in three-qubit systems}

\author{Kl\'{e}e Pollock}
\affiliation{Department of Electrical and Computer Engineering, Coordinated Science Laboratory, University of Illinois at Urbana-Champaign, Urbana, Illinois 61801, USA}
\affiliation{Department of Physics and Astronomy, Iowa State University of Science and Technology, Ames, Iowa 50014-3160, USA}
\author{Ge Wang}
\affiliation{Department of Physics, University of Illinois at Urbana-Champaign, Urbana, IL 61801, USA}
\author{Eric Chitambar}
\affiliation{Department of Electrical and Computer Engineering, Coordinated Science Laboratory, University of Illinois at Urbana-Champaign, Urbana, Illinois 61801, USA}

\date{\today}

\begin{abstract}
The entanglement of assistance quantifies the amount of entanglement that can be concentrated among a group of spatially-separated parties using the assistance of some auxiliary system.  In this paper we study the entanglement of assistance in the simplest scenario of three qubits.  We consider how much entanglement is lost when the helper party becomes uncoupled by local measurement and classical communication.  For three-qubit pure states, it is found that lossless decoupling is always possible when entanglement is quantified by the maximum probability of locally generating a Bell state.  However, with respect to most other entanglement measures, such as concurrence and entanglement entropy, a lossless decoupling of the helper is possible only for a special family of states that we fully characterize.  
\end{abstract}

\maketitle

\section{Introduction}

A quantum system prepared in a pure state is uncoupled from any other system in the universe.  Operationally, this means that if $p(a)$ is the outcome probability distribution when measuring a pure state of system $A$ and $p(c)$ is the outcome distribution when measuring any other system $C$, then their joint distribution will necessarily factorize: $p(a,c)=p(a)p(c)$.  When this fact is extended to pure states of bipartite systems, a monogamy relationship follows \cite{Werner-1989a, Coffman-2000a, Terhal-2004a}.  That is, if systems $A$ and $B$ are in some pure entangled state, then $C$ must be uncoupled from them both.  In particular, $A$ and $C$ cannot also be entangled.

\begin{figure}[h]
\includegraphics{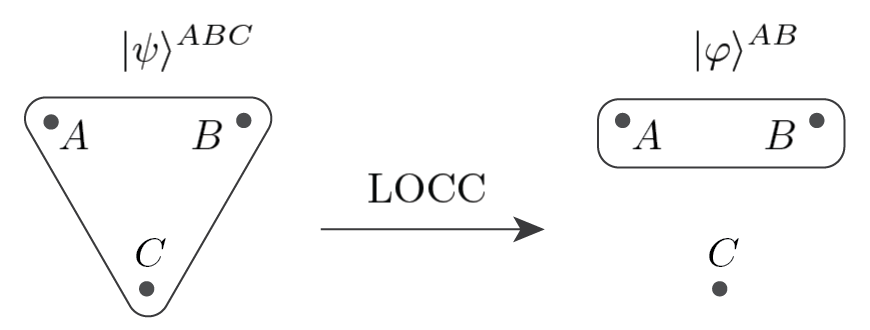}
\caption{A pure-state decoupling transformation.}
\label{fig:decoupling}
\end{figure}

While the ``correlation shielding'' provided by the monogamy of entanglement is essential for performing cryptographic tasks like quantum key distribution \cite{Lo-1999a, Vazirani-2014a}, it creates fundamental challenges to quantum information processing using multipartite entangled states.  As the simplest case, consider the three-node quantum network depicted in Fig. \ref{fig:decoupling}.  Suppose the parties stationed at each node share some tripartite entangled pure state $\ket{\psi}^{ABC}$, and they wish to transform it into a bipartite entangled state $\ket{\varphi}^{AB}$ held by parties Alice ($A$) and Bob ($B$).  Due to the monogamy of entanglement, Charlie ($C$) must become completely uncoupled from both Alice and Bob in the transformation  $\ket{\psi}^{ABC}\to \ket{\varphi}^{AB}$.  In other words, the final state is implicitly understood to have the tripartite form $\ket{\varphi}^{AB}\ket{0}^C$ (or more generally $\op{\varphi}{\varphi}^{AB}\otimes\rho^C$).  The uncoupling of Charlie is not modeled simply by tracing out his system, i.e. $\op{\varphi}{\varphi}^{AB}\not=\tr_C\op{\psi}{\psi}^{ABC}$, and more elaborate protocols are needed to achieve transformations of the form $\ket{\psi}^{ABC}\to \ket{\varphi}^{AB}$.  

In one scenario, Charlie could be seen as an unwanted party, either an eavesdropper or a noisy environment.  With Charlie being adversarial, the decoupling protocol would be driven exclusively by Alice and Bob, and the problem then reduces to the task of entanglement purification \cite{Deutsch-1996a, Bennett-1996a}.  In another scenario, Charlie is seen as a helper, and he provides assistance to Alice and Bob with the state transformation.  This assistance is provided by Charlie making local measurements on his subsystem and communicating his outcome to Alice and Bob.  Due to the stochastic nature of quantum measurements, it becomes more appropriate to consider multi-outcome transformations, in which Charlie is decoupled for each possible outcome.  We can denote such a process as $\ket{\psi}^{ABC}\to\ket{\varphi_i}^{AB}$, where $\ket{\varphi_i}^{AB}$ is obtained with probability $p_i$. 

The amount of help that Charlie can provide in generating bipartite pure-state entanglement for Alice and Bob is known as the Entanglement of Assistance (EoA) \cite{DiVincenzo-1999a}.  It can be defined as the maximum average entanglement, $\langle E_{A|B}\rangle:=\sum_i p_i E(\varphi_i)$, over all transformations $\ket{\psi}^{ABC}\to\ket{\varphi_i}^{AB}$ in which Charlie simply decouples himself by performing a single measurement.  Here, $E(\varphi_i)$ is the entanglement of state $\ket{\varphi_i}^{AB}$ for some chosen entanglement monotone $E$.  Note, the measurement of Charlie's which maximizes EoA for one monotone may not be the same as that which maximizes EoA for another, and we will see an example of this below.  Nevertheless, all entanglement monotones satisfy the monotonicity constraint
\begin{equation}
\label{Eq:EoA-bipartite-bound}
\langle E_{A|B}\rangle \leq \min\{E_{A|BC}(\psi),E_{B|AC}(\psi)\},
\end{equation}
where $E_{A|BC}(\psi)$ is the bipartite entanglement in the initial state $\ket{\psi}^{ABC}$ when parties $B$ and $C$ are grouped together, and similarly for $E_{B|AC}(\psi)$.  Ineq. \eqref{Eq:EoA-bipartite-bound} holds since monotonicity prevents both the $A|BC$ entanglement and the $B|AC$ entanglement from increasing on average.

For a given tripartite entangled state $\ket{\psi}^{ABC}$, a natural question is whether Charlie can perform a measurement that decouples him in a ``lossless'' fashion.  This means that there exists a multi-outcome transformation $\ket{\psi}^{ABC}\to\ket{\varphi_i}^{AB}$ such that Ineq.  \eqref{Eq:EoA-bipartite-bound} is tight, i.e. $\langle E_{A|B}\rangle = \min\{E_{A|BC}(\psi),E_{B|AC}(\psi)\}$.  Remarkably, it has been shown that for asymptotic pure state distillation, an equality in Ineq. \eqref{Eq:EoA-bipartite-bound} indeed holds after the LHS is replaced by the optimal rate of bipartite entanglement yield \cite{Smolin-2005a}.  The process of lossless asymptotic decoupling has been generalized to the multi-party setting in the form of entanglement combing \cite{Yang-2009a}.  However, in the single-copy setting, Ineq. \eqref{Eq:EoA-bipartite-bound} is generally strict.

The goal of this paper is to investigate which three-qubit states $\ket{\psi}^{ABC}$ can achieve an equality in Ineq. \eqref{Eq:EoA-bipartite-bound}.  The answer to this question depends on the choice of monotone $E$.  Our main result is showing that every three-qubit pure state can achieve $\langle E_2^{A|B}\rangle = \min\{E_2^{A|BC}(\psi),E_2^{B|AC}(\psi)\}$ for some measurement of Charlie, where $E_2(\varphi^{AB})$ is twice the smallest eigenvalue of $\varphi^A=\tr_B\op{\varphi}{\varphi}^{AB}$ \cite{Vidal-1999b}.  Operationally this quantity is important because it corresponds to the maximum probability of converting $\ket{\varphi}^{AB}$ into the maximally entangled state $\ket{\Phi^+}=\sqrt{1/2}(\ket{00}+\ket{11})$ using LOCC \cite{Lo-2001a}.  Hence our result implies that the optimal probability of converting any tripartite pure state $\ket{\psi}^{ABC}$ into a bipartite maximally-entangled state between $AB$ is the same whether $C$ acts either locally or jointly with one of the other parties.  It also implies that the EoA for $E_2$ is a genuine entanglement monotone in three qubits, a fact which does not hold for entanglement monotones in general \cite{Gour-2006a}.  We prove this result by showing that a sufficient condition for lossless $E_2$ decoupling presented in Ref. \cite{Gour-2005a} actually applies to all three-qubit states.  We then consider other entanglement monotones and provide the complete family of states for which $\langle E^{A|B}\rangle = \min\{E^{A|BC}(\psi),E^{B|AC}(\psi)\}$ with respect to any entanglement monotone that is a strictly concave function of the square-Schmidt coefficients \cite{Vidal-2000a}.  The latter includes the entanglement entropy and the concurrence \cite{Wootters-1998a}, to name a few.

\section{Definitions and Preliminaries}

\subsection{Bipartite Entanglement Measures}

The discovery that the nonlocal correlations generated by quantum entanglement are useful for communication and information processing has motivated an effort to precisely quantify entanglement.  Various entanglement measures have been proposed in the literature, and all of them satisfy the necessary property that their value cannot be increased under LOCC \cite{Horodecki-2009a}.  That is for any entanglement measure $E$, if $\mc{L}$ is a completely-positive trace-preserving (CPTP) map generated by a protocol of local operations and classical communication between $N$ parties, then $E\left(\rho\right)\geq E\left(\mc{L}(\rho)\right)$ for all $N$-partite states $\rho$.  It may be the case that $\mc{L}$ outputs a classical register (or ``flag'') storing various amounts of classical data or measurement outcomes $x$ obtained in the protocol.  This requires that
\begin{equation}
\label{Eq:entanglement-monotone1}
    E\left(\rho\right)\geq E\left(\sum_x p_x \rho_x\otimes\op{x}{x}\right),
\end{equation}
where $\rho_x$ is the quantum state conditioned on obtaining classical data $x$ and $p_x$ is the probability of this event.  States on the RHS of \eqref{Eq:entanglement-monotone1} are known as qc (quantum-classical) states.  Most entanglement measures of interest are convex-linear on qc states, meaning that $E\left(\sum_x p_x \rho_x\otimes\op{x}{x}\right)=\sum_x p_xE(\rho_x\otimes\op{x}{x})=\sum_x p_xE(\rho_x)$.  From this we see
\begin{equation}
\label{Eq:entanglement-monotone2}
    E\left(\rho\right)\geq \sum_x p_x E\left(\rho_x\right),
\end{equation}
and any function satisfying this relationship is known as an entanglement monotone.  Throughout this paper we will only consider measures that are convex-linear on qc states, and so Eq. \eqref{Eq:entanglement-monotone2} can make no distinction between the terms entanglement measure and monotone in terms of their behavior under LOCC.

The general structure of bipartite entanglement monotones has been analyzed by Vidal in Ref. \cite{Vidal-2000a}.  For LOCC entanglement transformations in which the initial and final states are pure, it suffices to consider monotones that are defined just on pure states.  Vidal shows that every pure state entanglement monotone is determined by a concave, symmetric function of the squared-Schmidt coefficients of the given state $\ket{\varphi}^{AB}$ (equivalently of the eigenvalues of the reduced density matrix $\varphi^A$).  For example, the Entanglement Entropy  $S_1(\varphi^{AB})=-\tr[\varphi^A\log\varphi^A]$, and more generally the $\alpha$-Entanglement Entropy $S_{\alpha}(\varphi^{AB})=\frac{1}{1-\alpha}\log{\tr{(\varphi^A)^{\alpha}}}$ for $\alpha\in[0,1]$, all arise from a concave, symmetric function $g : \Delta _n \rightarrow \mbb{R}^+$ on the $n$-dimensional unit simplex $\Delta_n$.  Another family of monotones are the so-called Ky-Fan functions $\{E_k\}_{k=1}^d$ \cite{Vidal-1999b}, which are given by $E_k(\varphi^{AB})=\sum_{n=k}^d\lambda^{\downarrow}_k$, where the $\lambda^{\downarrow}_k$ are the eigenvalues of $\varphi^A$ arranged in decreasing order and $d=\min\{\dim(\varphi^A),\dim(\varphi^B)\}$.  Note that $E_k$ generalizes the $E_2$ measure mentioned above.  A final example is the generalized concurrence (G-concurrence) $G(\varphi^{AB})=d\det[\varphi^A]^{\frac{1}{d}}$ \cite{PhysRevA.71.012318}, which for the two-qubit case ($d=2$) reduces to the original concurrence \cite{Wootters-1998a}, as well as the robustness of entanglement \cite{Vidal-1999a}.

\subsection{The Entanglement of Collaboration and Assistance}

As depicted in Fig. \ref{fig:decoupling}, the general problem we consider is the decoupling of some party $C$ from parties $AB$ when they all share an entangled state $\ket{\psi}^{ABC}$.  Under locality constraints, the most general procedure they can perform is three-way LOCC with multiple rounds of classical communication exchanged between all the parties.  The maximum average bipartite entanglement they generate under such a process is called the entanglement of collaboration (EoC) \cite{Gour-2006a, Gour-2006b}, and for a bipartite entanglement measure $E$ it is given by
\begin{align}
\label{Eq:EoC-definition}
    E^{(c)}(\psi^{ABC})=\sup_{\mc{L}}\sum_x p_xE(\varphi_x^{AB}),
\end{align}
with the supremum taken over all three-way LOCC maps $\mc{L}$ transforming $\ket{\psi}^{ABC}$ to $\ket{\varphi_x}^{AB}$ with probability $p_x$.  Note that EoC is an entanglement monotone for tripartite pure states since it is defined in terms of an optimization over LOCC protocols \cite{Vidal-2000a}.

In this paper we are interested in more restricted decoupling protocols in which only one-way communication from the helper is allowed. In particular we consider a protocol where Charlie (the helper) first makes a decoupling measurement on his part of $\ket{\psi}^{ABC}$, and then broadcasts his measurement outcome to both Alice and Bob. In this process, Charlie's measurement ``collapses'' the tripartite state into some bipartite entangled state shared by Alice and Bob, the form of which they are made aware when they receive Charlie's classical information.  Alice and Bob can then proceed with bipartite LOCC processing of their post-measurement state.

The Entanglement of Assistance is defined with respect to these types of protocols.   If $\vec{\mc{L}}$ denotes the set of one-way LOCC maps from Charlie to Alice and Bob, the EoA is given by
\begin{align}
\label{Eq:EoA-definition}
    E^{(a)}(\psi^{ABC})=\max_{\vec{\mc{L}}}\sum_x p_xE(\varphi_x^{AB}).
\end{align}
From Ineq. \eqref{Eq:EoA-bipartite-bound}, one can immediately see that any bipartite entanglement monotone $E$ satisfies the hierarchy
\begin{equation}
\label{Eq:EoA-bipartite-bound2}
    E^{(a)}(\psi)\leq  E^{(c)}(\psi)\leq \min\{E^{A|BC}(\psi),E^{B|AC}(\psi)\}.
\end{equation}

In this paper we are viewing the entanglement of assistance as a function of tripartite pure states \cite{Gour-2006a}.  However, this is not the only perspective one could take, and originally \cite{DiVincenzo-1999a} the EoA was introduced as a function of bipartite mixed states given by
\begin{equation}
\label{Eq:EoA-definition2}
    \wt{E}^{(a)}(\rho^{AB})=\max_{\{p_k,\ket{\varphi_k}\}_k} \sum_k p_k E(\varphi_k),
\end{equation}
where the maximization is taken over all pure state ensembles $\{p_k,\ket{\varphi_k}\}_k$ such that $\rho^{AB}=\sum_k p_k\op{\varphi_k}{\varphi_k}$.  Since every possible pure state ensemble for $\rho^{AB}$ can be realized as the post-measurement ensemble generated by some measurement of Charlie on any purification of $\ket{\psi}^{ABC}$ or $\rho^{AB}$ (i.e. $\rho^{AB}=\tr_C\op{\psi}{\psi}^{ABC}$) \cite{Hughston-1993a}, we clearly have
\begin{equation}
\label{Eq:EOA-defns}
    E^{(a)}(\psi^{ABC})=\wt{E}^{(a)}(\rho^{AB}).
\end{equation}
The reason we prefer to interpret EoA as a tripartite pure-state measure is because its operational meaning refers to a process conducted on three systems.  Stated differently, if an experimenter only has access to the bipartite state $\rho^{AB}$ and not a purifying third system, then the physical meaning of $\wt{E}^{(a)}(\rho^{AB})$ is not clear since it no longer describes how much entanglement can be concentrated between systems $A$ and $B$.


Unlike the EoC, the EoA is not a tripartite entanglement monotone in general \cite{Gour-2006a}.  This means that when starting with some state $\ket{\psi}^{ABC}$, there exists LOCC protocols in which communication from Alice and Bob to Charlie before he makes his decoupling measurement can increase their average post-measurement entanglement. A known exception to this is for $2\otimes 2\otimes n$ systems when concurrence is used as the two-qubit entanglement measure \cite{Gour-2005a}, as well as for higher-dimensional tripartite systems when the $G$-concurrence is the bipartite entanglement quantifier \cite{Gour-2006b}.  A main result of this paper establishes that $E_2^{(a)}$ is also an entanglement monotone for three-qubit pure states, which is the EoA for $E_2$ entanglement.


\section{Main Results}

The central question we ask is when $E^{(a)}(\psi)=\min\{E^{A|BC}(\psi),E^{B|AC}(\psi)\}$ in Ineq. \eqref{Eq:EoA-bipartite-bound2} for a three-qubit state $\ket{\psi}$.  
We begin our analysis by stating a structural result for three qubit pure states, which was proven in Ref. \cite{Gour-2005a} as a method for computing $E^{(a)}_2$.  In the following, we let $\ket{\Tilde{\Phi}^+}^{AB}$ denote the unnormalized maximally entangled state $\ket{00}+\ket{11}$ for some \textit{a priori} fixed bases of systems $A$ and $B$, and complex conjugation is understood to be taken with respect to these bases.
\begin{proposition}[\cite{Gour-2005a}]
\label{Prop:Gour}
Let $\ket{\psi}^{ABC}$ be an arbitrary $2\otimes 2\otimes n$ state.  Then there exists two orthonormal bases $\{\ket{k}^C\}_{k=1}^n$ and $\{\ket{k'}^C\}_{k=1}^n$ for system $C$ such that
\begin{align}
\ket{\psi}&=\sum_k A_k\otimes\mbb{I}^{BC}\ket{\Tilde{\Phi}^+}^{AB} \ket{k}^C
\label{Eq:basis-A}\\
\ket{\psi}&=\sum_k \mbb{I}^{AC}\otimes B_k\ket{\Tilde{\Phi}^+}^{AB} \ket{k'}^C\label{Eq:basis-B}
\end{align}
with both $\{A_kA_k^\dagger\}_k$ and $\{B_kB_k^\dagger\}_k$ being sets of pairwise commuting operators.  
\end{proposition}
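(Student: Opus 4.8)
The plan is to exploit the standard operator representation of bipartite states together with a classical fact about traceless matrices. For an arbitrary orthonormal basis $\{\ket{k}^C\}$ one may always write $\ket{\psi}=\sum_k\ket{\phi_k}^{AB}\ket{k}^C$ with unnormalized $\ket{\phi_k}^{AB}$, and each such vector can be expressed as $\ket{\phi_k}^{AB}=(A_k\otimes\mbb{I}^B)\ket{\tilde\Phi^+}^{AB}$, where $A_k$ is simply the coefficient matrix of $\ket{\phi_k}$ in the fixed bases; this already reproduces the form \eqref{Eq:basis-A}. A direct computation gives $\tr_B\op{\phi_k}{\phi_k}=A_kA_k^\dagger$, so the operators appearing in the statement are nothing but the (unnormalized) reduced states $\rho_k^A$ of the post-measurement states on $A$. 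Since $A$ is a qubit, the $\{A_kA_k^\dagger\}$ are $2\times 2$ positive semidefinite matrices, and they pairwise commute exactly when they are simultaneously diagonalizable. Thus it suffices to produce a measurement basis of $C$ for which every $\rho_k^A$ is diagonal in one common orthonormal basis of $A$.

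The natural candidate for that common basis is the eigenbasis $\{\ket{a_0},\ket{a_1}\}$ of the global marginal $\rho^A=\tr_{BC}\op{\psi}{\psi}$. Performing the $A|BC$ Schmidt decomposition $\ket{\psi}=\sum_{\alpha}\sqrt{\lambda_\alpha}\ket{a_\alpha}^A\ket{e_\alpha}^{BC}$ (at most two terms), I would project $C$ onto $\ket{k}$ and compute the single off-diagonal entry of $\rho_k^A$ in the $\{\ket{a_\alpha}\}$ basis. A short calculation shows this entry is proportional to $\langle e_1|(\mbb{I}^B\otimes\op{k}{k}^C)|e_0\rangle$, which I would rewrite as $\langle k|M|k\rangle$ for the single fixed operator $M=\tr_B\op{e_0}{e_1}$ acting on $C$. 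Hence the required pairwise-commuting condition collapses to choosing an orthonormal basis of $C$ that annihilates every diagonal entry of $M$.

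Here is the crux. Because $\{\ket{e_0},\ket{e_1}\}$ are orthonormal, $\tr M=\ip{e_1}{e_0}=0$, so $M$ is traceless. I would then invoke the classical fact that every traceless operator on a finite-dimensional space is unitarily equivalent to one with vanishing diagonal: by the Toeplitz--Hausdorff convexity of the numerical range, the range contains $\tr M/\dim(C)=0$, producing a unit vector with zero expectation; compressing $M$ to its orthogonal complement yields an operator that is again traceless, and an induction on $\dim(C)$ finishes the argument. The basis realizing this zero diagonal is exactly the measurement basis $\{\ket{k}^C\}$ we seek, and in it every $\rho_k^A$ is diagonal in $\{\ket{a_\alpha}\}$, so the $\{A_kA_k^\dagger\}$ commute. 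The decomposition \eqref{Eq:basis-B} follows by running the identical argument with the roles of $A$ and $B$ interchanged, using the eigenbasis of $\rho^B$ and the traceless operator $\tr_A\op{e'_0}{e'_1}$ built from the $B|AC$ Schmidt vectors, giving the second basis $\{\ket{k'}^C\}$.

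The main obstacle I anticipate is twofold. First, one must carry out the reduction carefully so that the pairwise-commuting requirement genuinely collapses to a single scalar condition $\langle k|M|k\rangle=0$ on a fixed operator; it is the qubit dimension of $A$ that makes ``commuting'' equivalent to ``simultaneously diagonal'' and leaves only one off-diagonal entry to kill. Second, one must supply the traceless zero-diagonal lemma, which is the genuinely nontrivial linear-algebra input driving the whole construction; everything else is bookkeeping with the transpose/operator representation of bipartite states.
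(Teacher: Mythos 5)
The paper does not prove Proposition \ref{Prop:Gour} at all --- it imports the statement from Ref.\ \cite{Gour-2005a} --- so there is no in-paper argument to compare against. Your proposal is correct and is essentially the standard proof of this decomposition: the identification $A_kA_k^\dagger=\tr_B\op{\phi_k}{\phi_k}$ is right, the reduction of pairwise commutativity to vanishing of the single off-diagonal entry $\sqrt{\lambda_0\lambda_1}\,\bra{k}M\ket{k}$ in the eigenbasis of $\psi^A$ is right (with $M=\tr_B\op{e_0}{e_1}$ traceless because $\ip{e_1}{e_0}=0$), and the Toeplitz--Hausdorff/induction argument producing a zero-diagonal basis for a traceless operator is the genuine linear-algebraic engine, exactly as in the cited literature. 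One cosmetic point worth making explicit: the commutativity of the $A_kA_k^\dagger$ is a basis-independent property of the operators $\tr_B\op{\phi_k}{\phi_k}$, so establishing simultaneous diagonality in the eigenbasis of $\psi^A$ suffices even though $\ket{\tilde\Phi^+}$ is defined relative to a different, \emph{a priori} fixed basis; and the degenerate cases ($\lambda_0\lambda_1=0$, or $\psi^A\propto\mbb{I}$) are trivially absorbed.
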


Using this proposition, we prove our first result.
\begin{theorem}
\label{Thm:1}
For any three-qubit state $\ket{\psi}^{ABC}$
\begin{equation}
E_2^{(a)}(\psi) = \min \{  E_2^{A|BC}(\psi), E_2^{B|AC}(\psi)\}.
\end{equation}
\end{theorem}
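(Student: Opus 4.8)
My plan is to prove the two matching inequalities. The upper bound $E_2^{(a)}(\psi)\le\min\{E_2^{A|BC}(\psi),E_2^{B|AC}(\psi)\}$ is already furnished by the monotonicity hierarchy \eqref{Eq:EoA-bipartite-bound2}, so everything reduces to exhibiting a single explicit measurement for Charlie that attains it. Relabeling $A$ and $B$ if needed, I assume the minimum sits on the cut $A|BC$, and I feed the state through the decomposition \eqref{Eq:basis-A} of Proposition \ref{Prop:Gour}. The first move is to let Charlie measure in the orthonormal basis $\{\ket{k}^C\}$ supplied there. Pairing $(A_k\otimes\mbb{I}^{BC})\ket{\Tilde{\Phi}^+}^{AB}$ against $\ket{k}^C$ shows that outcome $k$ occurs with probability $p_k=\tr[A_kA_k^\dagger]$ and collapses $AB$ onto the state whose reduced operator is $\varphi_k^A=A_kA_k^\dagger/p_k$. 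Since $E_2$ is twice the least eigenvalue, the average entanglement harvested is $\sum_k p_k E_2(\varphi_k)=2\sum_k\lambda_{\min}(A_kA_k^\dagger)$, whereas the goal is $E_2^{A|BC}(\psi)=2\lambda_{\min}(\sum_k A_kA_k^\dagger)$, because $\psi^A=\sum_k A_kA_k^\dagger$.

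The problem is thereby distilled to a statement about the two commuting positive operators $M_k:=A_kA_k^\dagger$ on the qubit $A$: I need $\sum_k\lambda_{\min}(M_k)=\lambda_{\min}(\sum_k M_k)$. Superadditivity of the minimal eigenvalue gives ``$\le$'' automatically, with equality exactly when the $M_k$ share a common minimal eigenvector. Commutativity already places them in a single eigenbasis of $\psi^A$, so the only obstruction is that the smallest eigenvalues of $M_1$ and $M_2$ might be carried by different eigenvectors. Proving that this never happens for the minimizing cut is precisely the sufficient condition for lossless $E_2$-decoupling isolated in Ref. \cite{Gour-2005a}, and it is the crux of the theorem.

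To settle the alignment I would argue by contradiction in the Bloch picture. Writing the conditional vectors as $\ket{w_k}=\ket{0}^A\ket{\alpha_k}+\ket{1}^A\ket{\beta_k}$ in the eigenbasis of $\psi^A$, commutativity forces $\ip{\alpha_k}{\beta_k}=0$, so each conditional state on $B$ equals $\|\alpha_k\|^2\op{\hat\alpha_k}{\hat\alpha_k}+\|\beta_k\|^2\op{\hat\beta_k}{\hat\beta_k}$ with Bloch vector along some axis $\hat p_k$. Assuming misalignment, the two eigenvalue gaps enter $\psi^B$ with opposite signs, and expressing both $\lambda_{\min}(\psi^A)$ and $\lambda_{\min}(\psi^B)$ through these gaps turns the standing hypothesis $\lambda_{\min}(\psi^A)\le\lambda_{\min}(\psi^B)$ into a bound that collides with the reverse triangle inequality, which can hold only if $\hat p_1=\hat p_2$. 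I expect this collinearity step to be the main obstacle, since it is where the two-dimensionality of every subsystem is genuinely used.

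Finally I would dispatch the degenerate case $\hat p_1=\hat p_2$ produced by the contradiction. There the conditional vectors share both an $A$-basis and a $B$-basis, so $\ket{\psi}$ is maximally correlated across $A|B$, the two marginals are isospectral, and $\lambda_{\min}(\psi^A)=\lambda_{\min}(\psi^B)$. On such a support every pure state is automatically of aligned Schmidt form, which frees me to re-decompose $\psi^{AB}$ at will: a hemisphere argument on the qubit support---producing $\psi^{AB}=\sum_x\op{w_x}{w_x}$ with every $\tr_B\op{w_x}{w_x}$ placing its smaller weight on the common minimal eigenvector, feasible precisely because $\tr[(\op{0}{0}-\op{1}{1})\,\psi^A]\le0$---then yields $\sum_x p_x E_2(\varphi_x)=2\lambda_{\min}(\psi^A)$ and saturates the bound. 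Combining the generic and degenerate cases closes the gap and gives $E_2^{(a)}(\psi)=\min\{E_2^{A|BC}(\psi),E_2^{B|AC}(\psi)\}$.
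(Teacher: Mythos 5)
Your proposal is correct and follows essentially the same route as the paper's proof: Gour's commuting decomposition (Proposition \ref{Prop:Gour}), the parallel/anti-parallel dichotomy for the conditional Bloch vectors, the reverse triangle inequality forcing the residual case down to a maximally correlated state on $\mathrm{span}\{\ket{00},\ket{11}\}$, and a re-decomposition of that state via a measurement of Charlie. The only cosmetic differences are that you invoke just one of the two decompositions together with a WLOG on which cut attains the minimum (where the paper uses both decompositions to derive $|\mathbf{R}|=|\mathbf{S}|$ before assuming anything about the minimum), and you replace the paper's explicit $\{\ket{e_0},\ket{e_1}\}$ construction at the end with an equivalent intermediate-value ``hemisphere'' argument.
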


\begin{proof}
Assume that $\tr_C{\op{\psi}{\psi}}^{ABC}$ is not rank-one, or else Charlie is already decoupled and the statement trivially holds. Consider the operators $\{A_k\}_{k=1}^2$ and $\{B_k\}_{k=1}^2$ given by Eqns. \eqref{Eq:basis-A} and \eqref{Eq:basis-B} in Proposition \ref{Prop:Gour}, where we have taken $n=2$.  
For $\ket{\psi_k}:=(A_k\otimes \mbb{I})\ket{\tilde{\Phi}^+}$ and $p_k=\tr[A_kA_k^\dagger]$, we can write
\begin{align}
    \rho_k^A=\tr_B \op{\psi_k}{\psi_k}=\frac{1}{p_k}A_k A_k^{\dagger}= \frac{1}{2}(\mbb{I}+\mbf{r}_k\cdot\vec{\sigma}),
\end{align}
where $\mbf{r}_k$ are the Bloch vectors of the $\rho_k^A$.  Since $\rho_k^A$ and $\rho_k^B=\tr_A\op{\psi_k}{\psi_k}$ have the same eigenvalue spectrum, they are unitarily related.  This means we have $\rho_k^B=\frac{1}{2}(\mbb{I}+\mbf{r}'_k\cdot\vec{\sigma})$ for Bloch vectors $\mbf{r}'_k$ for which $|\mbf{r}'_k|=|\mbf{r}_k|$.  Similarly, we can write
\begin{align}
    \sigma_k^B=\frac{1}{q_k}B_k B_k^{\dagger}= \frac{1}{2}(\mbb{I}+\mbf{s}_k\cdot\vec{\sigma}),
\end{align}
where $q_k=\tr[B_kB_k^\dagger]$ and $\sigma_k^A=\frac{1}{2}(\mbb{I}+\mbf{s}_k'\cdot\vec{\sigma})$ with $|\mbf{s}'_k|=|\mbf{s}_k|$.  Expressing the commutators in terms of the Bloch vectors,  
\begin{align}
\label{Eq:Bloch-vector-commutator-A}
    \lbrack \rho_1^A, \rho_2^A \rbrack
    &=\frac{i}{2}(\mbf{r}_2\times\mbf{r}_1)\cdot\vec{\sigma}\\
    \label{Eq:Bloch-vector-commutator-B}
    \lbrack \sigma_1^B, \sigma_2^B \rbrack
    &=\frac{i}{2}(\mbf{s}_2\times\mbf{s}_1)\cdot\vec{\sigma},
\end{align}
the vanishing commutators imply that $\{\mbf{r}_1,\mbf{r}_2\}$ and $\{\mbf{s}_1,\mbf{s}_2\}$ are pairs of vectors that are either parallel (denoted $\uparrow\uparrow)$ or anti-parallel (denoted $\uparrow\downarrow$).  We identify the pair $\{\mbf{r}_1,\mbf{r}_2\}$ ($\{\mbf{s}_1,\mbf{s}_2\}$) as being parallel when one of them is zero.

The reduced density operators of the given state $\ket{\psi}^{ABC}$ can also be expressed in terms of their Bloch vectors,
\begin{align}
    \rho^{A} &= \tr_{BC} \op{\psi}{\psi}^{ABC}= \frac{1}{2}(\mbb{I}+\mbf{R}\cdot\vec{\sigma})\\
    \quad\sigma^{B} &= \tr_{AC} \op{\psi}{\psi}^{ABC}= \frac{1}{2}(\mbb{I}+\mbf{S}\cdot\vec{\sigma}).
\end{align}
It is straightforward to compute that
\begin{align}
   E^{A|BC}(\psi)&= 1 - |\mbf{R}|,& E^{B|AC}(\psi)&= 1 - |\mbf{S}|.
\end{align}
Now suppose that $\mbf{r}_1\uparrow\uparrow\mbf{r}_2$.  Then since $\mbf{R}=\sum_k p_k\mbf{r}_k$, when Charlie measures $\ket{\psi}$ in the basis $\ket{k}^C$, the average post-measurement $E_2$ entanglement for Alice and Bob will be
\begin{align}
    \langle E_2^{A|B} \rangle = 1 - \sum_k p_k |\mbf{r}_k| = 1 - |\mbf{R}| = E_2^{A|BC}.
\end{align}
Similarly, when $\mbf{s}_1\uparrow\uparrow\mbf{s}_2$, Charlie can measure in the basis $\ket{k'}^C$ to assist in generating entanglement $E_2^{B|AC}$ for Alice and Bob.  Therefore, we have proven Theorem \ref{Thm:1} whenever $\mbf{r}_1\uparrow\uparrow\mbf{r}_2$ or $\mbf{s}_1\uparrow\uparrow\mbf{s}_2$.  It remains to consider the case when both $\mbf{r}_1\uparrow\downarrow\mbf{r}_2$ and $\mbf{s}_1\uparrow\downarrow\mbf{s}_2$ with none of these Bloch vectors being zero.

Suppose that $\mbf{r}_1\uparrow\downarrow\mbf{r}_2$.  Since 
\[\sigma^B=\tr_{AC}\op{\psi}{\psi}=\sum_k p_k\rho_k^B=\sum_k \frac{p_k}{2}(\mbb{I}+\mbf{r}_k'\cdot\vec{\sigma}),\]
we have
\begin{align}
\label{Eq:Block-triangle-ineq}
    |\mbf{S}|=|p_1\mbf{r}_1'+p_2\mbf{r}_2'|&\geq |p_1|\mbf{r}_1'|-p_2|\mbf{r}'_2||\notag\\
    &=|p_1|\mbf{r}_1|-p_2|\mbf{r}_2||\notag\\
    &=|p_1\mbf{r}_1+p_2\mbf{r}_2|=|\mbf{R}|,
\end{align}
where the first inequality is the triangle inequality and the last line follows from the vectors being anti-parallel.  Similarly, since
\[\rho^A=\tr_{BC}\op{\psi}{\psi}=\sum_k q_k\sigma_k^A=\sum_k \frac{q_k}{2}(\mbb{I}+\mbf{s}_k'\cdot\vec{\sigma}),\] the assumption that $\mbf{s}_1\uparrow\downarrow\mbf{s}_2$ 
yields $|\mbf{R}|\geq|\mbf{S}|$. Hence, we have that $|\mbf{R}|=|\mbf{S}|$ or equivalently $E_2^{A|BC}(\psi)=E_2^{B|AC}(\psi)$.  Furthermore, the equality condition in Ineq. \eqref{Eq:Block-triangle-ineq} shows that $\mbf{r}'_1\uparrow\downarrow\mbf{r}'_2$, and so we obtain the commutator relation $[\rho_1^B,\rho_2^B]=0$.  Therefore, when comparing with Eq. \eqref{Eq:Bloch-vector-commutator-A}, we see that the reduced density matrices for $\ket{\psi_1}$ and $\ket{\psi_2}$ are pairwise commuting for both parties.  Since $\mbf{r}'_1$ and $\mbf{r}_2'$ are nonzero, the reduced density matrices have no degeneracy in their eigenvalue spectrum, and so the states $\ket{\psi_1}$ and $\ket{\psi_2}$ have unique Schmidt bases for both parties.  The pairwise commuting relations then imply that either $\ket{\psi}\overset{\mathrm{LU}}{\simeq}(a\ket{00}+b\ket{11})\ket{0}+(c\ket{01}+d\ket{10})\ket{1}$ or $\ket{\psi}\overset{\mathrm{LU}}{\simeq}(a\ket{00}+b\ket{11})\ket{0}+(c\ket{00}+d\ket{11})\ket{1}$.  In the first case, the condition $E_2^{A|BC}=E_2^{B|AC}$ requires that either $a=b$ or $c=d$, which is a contradiction since we assume that the Bloch vectors are nonzero.  In the second case, we can rewrite the state as
\begin{equation}\label{Eq:Lemma_1_Form}
    \ket{\psi}^{ABC}\overset{\mathrm{LU}}{\simeq}\sqrt{p}\ket{00}\ket{\eta_0}+\sqrt{1-p}\ket{11}\ket{\eta_1},
\end{equation}
where the $\ket{\eta_0}$ and $\ket{\eta_1}$ on Charlie's system are normalized but not necessarily orthogonal. However, we can always find a basis $\{\ket{e_0},\ket{e_1}\}$ such that
\begin{align}
    \ket{\eta_0}&=\cos\theta\ket{e_0}+\sin\theta\ket{e_1}\notag\\
    \ket{\eta_1}&=\cos\theta\ket{e_0}-\sin\theta\ket{e_1}
\end{align}
for some $\theta$ determined by $\ip{\eta_0}{\eta_1}$. Then
\begin{equation}
    \begin{aligned}
    \ket{\psi}^{ABC}\overset{\mathrm{LU}}{\simeq}&\cos\theta(\sqrt{p}\ket{00}+\sqrt{1-p}\ket{11})\ket{e_0}\\
    +&\sin\theta(\sqrt{p}\ket{00}-\sqrt{1-p}\ket{11})\ket{e_1}.
    \end{aligned}
\end{equation}
When Charlie measures in the basis $\ket{e_k}$, we get the optimal post-measurement entanglement.
\end{proof}

We now ask for conditions under which we achieve lossless decoupling when EoA is measured with respect to other entanglement monotones. As noted above, any two-qubit pure state entanglement monotone $E$ is given by $E(\varphi^{AB})=f(\lambda_{\min}(\varphi^{A}))$ with  $f:[0,\frac{1}{2}]\rightarrow \mbb{R}^+$ being concave and $\lambda_{\min}(\varphi^A)$ the smallest eigenvalue of $\varphi^A$.  Via Eq. \eqref{Eq:EOA-defns}, there exists an optimal ensemble $\{p_x,\ket{\varphi_x}\}_x$ (with all $p_x>0$) for $\tr_C{\op{\psi}{\psi}}^{ABC}$ such that
\begin{align}
   E^{(a)}(\psi^{ABC})&=\sum_x p_xE(\varphi_x^{AB})\notag\\
  &=\sum_x p_xf(\lambda_{\min}(\varphi_x^{A}))\notag\\
 &\overset{\text{(a)}}{\leq} f\left(\sum_x  p_x\lambda_{\min}(\varphi_x^A)\right)\notag\\
 &\overset{\text{(b)}}{\leq} f(\lambda_{\min}(\psi^A))=E^{A|BC}(\psi^{ABC}).\label{Eq:chain-ineq}
\end{align}
Inequality (a) is the concavity of $f$, and (b) follows from $f$ monotonically increasing on the interval $[0,\frac{1}{2}]$ along with the fact that 
\begin{equation}
\label{Eq:lambda-min-inequ}
\sum_x p_x\lambda_{\min}(\varphi_x^A)\leq \lambda_{\min}\left(\sum_x p_x\varphi_x^A\right)=\lambda_{\min}(\psi^A).
\end{equation}
A similar chain of inequalities allows us to replace the RHS with $E^{B|AC}(\psi^{ABC})$.  

Motivated by the fact that some common entanglement monotones such as entanglement entropy and concurrence are determined by an $f$ \textit{strictly} concave, we restrict $f$ to be as such. In this case, $f$ is then continuous and strictly increasing on the interval $(0,1/2]$.  If we find for example that $E^{(a)}(\psi^{ABC})=E^{A|BC}(\psi^{ABC})$, then the strict concavity of $f$ implies that the optimal ensemble $\{p_x,\ket{\varphi_x}\}_x$ in Eq. \eqref{Eq:chain-ineq} either consists of just a single nonzero element, or $\lambda_{\min}(\varphi_x^A)$ is the same for every $\varphi_x$.  If there is just a single element in the pure-state decomposition of $\varphi^{AB}$, then system $C$ is already decoupled. Otherwise, if the ensemble has multiple elements with $\lambda_{\min}(\varphi_x^A)$ being the same for every $\varphi_x$,  then because these are qubit operators, there exist unitary matrices $U_x$ such that
\begin{equation}
\varphi_x^A=U_x\varphi_0^A U_x^\dagger
\end{equation}
for some fixed $\varphi_0^A$.  Tightness in Ineq. \eqref{Eq:lambda-min-inequ} then requires 
\begin{equation}
\label{Eq:lambda-min-eq}
\lambda_{\min}(\varphi_0^A)=\lambda_{\min}\left(\sum_x p_xU_x \varphi_0^A U_x^\dagger \right)=\lambda_{\min}(\psi^A),
\end{equation}
and the following proposition characterizes when such a condition holds. 
\begin{proposition}
\label{Prop:lambda-min-commuting}
A qubit Hermitian operator $H$ satisfies $\lambda_{\min}(H)=\lambda_{\min}\left(\sum_{x=0}^t p_x U_x H U_x^\dagger\right)$ for unitary operators $U_x$ such that $U_0=\mbb{I}$ and non-zero probabilities $p_x$ if and only if $[H,U_x]=0$ for all $x$.
\end{proposition}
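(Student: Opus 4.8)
The plan is to dispose of the ``if'' direction at once and spend the effort on the converse. If $[H,U_x]=0$ for every $x$, then $U_x H U_x^\dagger = H$, so $\sum_x p_x U_x H U_x^\dagger = H$ since the $p_x$ sum to one, and the two minimal eigenvalues agree trivially. The content is therefore in showing that the equality of minimal eigenvalues forces commutation.

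For the converse I would set $M := \sum_x p_x U_x H U_x^\dagger$ and write $\lambda := \lambda_{\min}(H) = \lambda_{\min}(M)$. The first move is the variational (Rayleigh) characterization of the smallest eigenvalue. Choosing a unit minimal eigenvector $\ket{w}$ of $M$ and defining $\ket{\phi_x} := U_x^\dagger\ket{w}$, I would expand
\begin{equation}
\lambda = \langle w|M|w\rangle = \sum_x p_x\,\langle\phi_x|H|\phi_x\rangle,
\end{equation}
and observe that each summand satisfies $\langle\phi_x|H|\phi_x\rangle \geq \lambda$. Because the $p_x$ are strictly positive and sum to one while their weighted average equals $\lambda$, every term must saturate, so $\langle\phi_x|H|\phi_x\rangle=\lambda$ for all $x$. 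This pins each $\ket{\phi_x}$ into the minimal eigenspace of $H$, and since $U_0=\mbb{I}$ gives $\ket{\phi_0}=\ket{w}$, the vector $\ket{w}$ is itself a minimal eigenvector of $H$.

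The qubit structure then finishes the argument, and this is precisely where the dimension-two hypothesis is indispensable. If $H=\lambda\mbb{I}$ is a scalar it commutes with every $U_x$ and we are done. Otherwise the minimal eigenvalue is non-degenerate, so the minimal eigenspace is the single ray spanned by $\ket{w}$; since $\ket{\phi_x}=U_x^\dagger\ket{w}$ lies in that ray, $\ket{w}$ is an eigenvector of $U_x$. Unitarity of $U_x$ then forces it to preserve the orthogonal complement $\ket{w^\perp}$, which in two dimensions is again a single ray, so $\ket{w^\perp}$ is also an eigenvector of $U_x$; and $\ket{w^\perp}$ is automatically an eigenvector of the Hermitian $H$. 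Hence $H$ and $U_x$ are simultaneously diagonal in the basis $\{\ket{w},\ket{w^\perp}\}$ and therefore commute.

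I expect the main obstacle to be the equality analysis rather than the final linear algebra: one must argue carefully that saturation of the \emph{averaged} Rayleigh quotient drives each individual $\ket{\phi_x}$ into the minimal eigenspace, and must keep the degenerate (scalar $H$) case separate. The concluding commutation step is short but leans entirely on the fact that in a qubit a single shared eigenvector already fixes the whole simultaneous eigenbasis, a shortcut that would not be available in higher dimensions.
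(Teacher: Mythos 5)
Your proof is correct. It differs from the paper's primary argument, which invokes an external structure theorem: the operators $\sqrt{p_x}U_x$ are viewed as Kraus operators of a unital qubit channel, and a result of Kribs identifies the fixed-point set of such a channel with the commutant of its Kraus operators. Your route is closer to the paper's appendix proof, which also reduces the problem to exhibiting a common eigenvector of $H$ and the $U_xHU_x^\dagger$ and then exploits the two-dimensionality; but where the appendix cites the equality condition for $\lambda_{\max}(A+B)=\lambda_{\max}(A)+\lambda_{\max}(B)$ (plus an induction and a trace argument to pass between $\lambda_{\max}$ and $\lambda_{\min}$), you obtain the common eigenvector directly by saturating the averaged Rayleigh quotient: since each $\langle\phi_x|H|\phi_x\rangle\geq\lambda_{\min}(H)$ and the $p_x$-weighted average equals $\lambda_{\min}(H)$ with all $p_x>0$, every $\ket{\phi_x}=U_x^\dagger\ket{w}$ must lie in the minimal eigenspace. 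This is more elementary and self-contained than either of the paper's versions, and you correctly isolate the two places where the hypotheses matter: $U_0=\mbb{I}$ is what makes $\ket{w}$ itself a minimal eigenvector of $H$ (without it the claim is false, e.g.\ conjugation by a single fixed unitary preserves the spectrum), and the qubit assumption is what lets a single shared eigenvector force simultaneous diagonalization via the orthogonal complement. The trade-off is only generality of technique: the channel-fixed-point argument suggests how one might attack higher dimensions, whereas your closing step is intrinsically two-dimensional, as you note.
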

\begin{proof}
(Here we present a simple proof based on an external result. A slightly longer but from first principles one is presented in the appendix). Sufficiency is obvious. To prove necessity, note that the operators $\sqrt{p_x}U_x$ described above may be thought of as Kraus operators for a unital qubit channel. In \cite{Kribs}, it is shown that the fixed point set of a unital channel is exactly the set of all operators commuting with all Kraus operators in any operator-sum representation of the channel. With this in mind, suppose that $\lambda_{\min}(H)=\lambda_{\min}\left(\sum_x^t p_x U_x H U_x^\dagger\right)$. Then we have for some unitary $V$, $V^{\dagger}HV=\sum_x^t p_x U_x H U_x^\dagger$, so that $H$ is a fixed point of a new unital channel whose Kraus operators are $\sqrt{p_x}VU_x$. From here it follows that $[H, VU_x]=0$ for all $x$. The requirement that $U_0=\mbb{I}$ implies $[H,V]=0$, so that $[H, U_x]=0$ for all $x$.
\end{proof}

We combine this proposition with Eq. \eqref{Eq:lambda-min-eq} to conclude that
\begin{theorem}
\label{thm:2}
Let $E$ be an entanglement monotone for two qubits determined by a strictly concave function of the Schmidt coefficients.  Then a three qubit state $\ket{\psi}^{ABC}$ will satisfy $E^{(a)}(\psi)=E^{A|BC}(\psi)$ if and only if
\begin{align}
\label{Eq:three-qubit-equality-decomposition}
    \ket{\psi}^{ABC}\overset{\mathrm{LU}}{\simeq} \sum_{x}\sqrt{p_x} (U_x^A\otimes V_x^B)\ket{\lambda_{\min}}^{AB}\ket{x}^{C}
\end{align}
where the $p_x$ are understood to be non-zero, $\ket{\lambda_{\min}}=\sqrt{\lambda_{\min}}\ket{00}+\sqrt{1-\lambda_{\min}}\ket{11}$ for $\lambda_{\min}:=\lambda_{\min}(\psi^A)$, and either each $U_x^A$ is diagonal in the $\{\ket{0},\ket{1}\}$ basis or $\lambda_{\min}=1/2$.  An analogous statement holds under an interchange of systems $A$ and $B$.
\end{theorem}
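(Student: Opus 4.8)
The statement is an equivalence, and the chain \eqref{Eq:chain-ineq} together with Proposition \ref{Prop:lambda-min-commuting} have already done most of the work; what remains is to verify sufficiency directly and, for necessity, to translate the commuting condition into the normal form \eqref{Eq:three-qubit-equality-decomposition}. For the ``if'' direction, suppose $\ket{\psi}$ has the stated form and let Charlie measure $C$ in the orthonormal basis $\{\ket{x}^C\}$. Outcome $x$ occurs with probability $p_x$ and collapses $AB$ to $(U_x^A\otimes V_x^B)\ket{\lambda_{\min}}$, whose reduced state on $A$ has least eigenvalue exactly $\lambda_{\min}$, so the average post-measurement entanglement is $\sum_x p_x f(\lambda_{\min})=f(\lambda_{\min})$. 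Because the $\ket{x}^C$ are orthonormal, $\psi^A=\sum_x p_x U_x^A \varphi_0^A U_x^{A\dagger}$, which equals $\varphi_0^A$ when each $U_x^A$ is diagonal and equals $\tfrac12\mbb{I}$ when $\lambda_{\min}=1/2$; in either case $\lambda_{\min}(\psi^A)=\lambda_{\min}$ and $E^{A|BC}(\psi)=f(\lambda_{\min})$. Matching this against the universal bound \eqref{Eq:chain-ineq} forces equality.

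For necessity I would start from the optimal ensemble $\{p_x,\ket{\varphi_x}\}$ already produced, for which Proposition \ref{Prop:lambda-min-commuting} gives $[\varphi_0^A,U_x]=0$ and \eqref{Eq:lambda-min-eq} gives $\lambda_{\min}(\varphi_0^A)=\lambda_{\min}(\psi^A)=:\lambda_{\min}$. After a local unitary on $A$ I diagonalize $\varphi_0^A=\mathrm{diag}(\lambda_{\min},1-\lambda_{\min})$. The commuting relation then splits into two cases: if $\lambda_{\min}<1/2$ the spectrum is nondegenerate and each $U_x$ is forced diagonal in $\{\ket{0},\ket{1}\}$ (so in fact $\varphi_x^A=\varphi_0^A$ for all $x$), while if $\lambda_{\min}=1/2$ then $\varphi_0^A\propto\mbb{I}$ and the $U_x$ are unconstrained. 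In both cases $\ket{\varphi_x}$ shares the Schmidt spectrum of $\ket{\lambda_{\min}}$, so there exist unitaries with $\ket{\varphi_x}=(U_x^A\otimes V_x^B)\ket{\lambda_{\min}}$; any residual diagonal freedom can be shifted from $A$ onto $V_x^B$ using $(D\otimes\mbb{I})\ket{\lambda_{\min}}=(\mbb{I}\otimes D)\ket{\lambda_{\min}}$ for diagonal $D$, leaving $U_x^A$ exactly the commuting (diagonal) unitary.

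The remaining and most delicate point is to realize $\ket{\psi}$ in the form \eqref{Eq:three-qubit-equality-decomposition} with orthonormal $\{\ket{x}^C\}$. Via \eqref{Eq:EOA-defns} the ensemble is produced by a measurement of Charlie, so $\ket{\psi}=\sum_x\sqrt{p_x}\ket{\varphi_x}\ket{c_x}^C$ for some $C$-vectors $\ket{c_x}$; but since $C$ is a single qubit, the target form demands these be orthonormal, i.e. that the optimum be attained by a two-outcome (projective) measurement. I expect this to be the main obstacle. I would dispatch it by exploiting that $\rho^{AB}=\tr_C\op{\psi}{\psi}$ has rank at most two: parametrizing the two-element decompositions of its support by a chord of the Bloch sphere and using the continuous dependence of the two endpoints' smallest $A$-eigenvalues on the chord's orientation, a swap symmetry together with the intermediate value theorem yields a two-element decomposition that equalizes these eigenvalues. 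The point needing care is that the equalized value is the optimum $\lambda_{\min}$ and not something smaller; here I would bound any equalized value by $\lambda_{\min}$ through \eqref{Eq:lambda-min-inequ} and invoke the hypothesis $E^{(a)}(\psi)=E^{A|BC}(\psi)$ to force saturation of the concavity estimate, so that this two-element decomposition is itself optimal. Its two elements then inherit the $(U_x^A\otimes V_x^B)\ket{\lambda_{\min}}$ structure by the necessity argument above, and since a two-element decomposition of a rank-two operator has orthonormal purifying vectors, $\ket{\psi}=\sum_{x=0,1}\sqrt{p_x}\ket{\varphi_x}\ket{x}^C$ takes exactly the claimed form. The $B|AC$ version follows by interchanging $A$ and $B$ throughout.
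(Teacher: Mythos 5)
Your sufficiency argument and the first part of your necessity argument follow the paper's route exactly: the chain \eqref{Eq:chain-ineq}, strict concavity forcing a common value of $\lambda_{\min}(\varphi_x^A)$ across the optimal ensemble, tightness of \eqref{Eq:lambda-min-inequ} feeding Proposition \ref{Prop:lambda-min-commuting}, and then the dichotomy $\lambda_{\min}<1/2$ (each $U_x$ diagonal, hence $\varphi_x^A=\varphi_0^A=\psi^A$) versus $\lambda_{\min}=1/2$ ($U_x$ unconstrained), together with the standard trick of shifting diagonal phases across $\ket{\lambda_{\min}}$. That is correct and is essentially all the paper itself does to establish the theorem.

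The ``delicate point'' you isolate is, however, both a misreading of the statement and, as argued, a genuine gap. The theorem does not assert that the $\ket{x}^C$ in \eqref{Eq:three-qubit-equality-decomposition} are orthonormal or that the optimum is attained by a two-outcome projective measurement: the remark immediately following the theorem allows the index $x$ to run over up to four values (citing Uhlmann), which is impossible for orthonormal vectors of a single qubit $C$. The decomposition is to be read as the optimal pure-state ensemble of $\tr_C\op{\psi}{\psi}$, realizable by an arbitrary POVM of Charlie via the HJW theorem, so no reduction to two outcomes is required (and correspondingly, in your sufficiency direction you should argue from the ensemble realized by a POVM rather than from a measurement ``in the basis $\{\ket{x}^C\}$'', since with non-orthogonal $\ket{x}^C$ the cross terms would otherwise spoil the identity $\psi^A=\sum_x p_x U_x^A\varphi_0^AU_x^{A\dagger}$). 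If you nonetheless insist on a two-element form, your intermediate-value argument does not deliver it: continuity over chords through $\rho^{AB}$ in the Bloch ball of its support does produce a two-element decomposition whose endpoints have equal smallest $A$-eigenvalue $\mu$, and \eqref{Eq:lambda-min-inequ} gives $\mu\le\lambda_{\min}(\psi^A)$, but the hypothesis $E^{(a)}(\psi)=E^{A|BC}(\psi)$ only guarantees that \emph{some} ensemble (possibly with three or four elements) attains the optimum; nothing forces \emph{your} equalized two-element decomposition to be optimal, i.e.\ nothing rules out $\mu<\lambda_{\min}(\psi^A)$. That step should either be dropped as unnecessary or replaced by a genuinely different argument.
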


The statement of Theorem \ref{thm:2} also holds for $2\otimes2\otimes n$ states, as there is nothing restricting the dimension of Charlie's system in the preceding analysis. Note that in any case, it suffices to consider an index $x$ in Eq. \eqref{Eq:three-qubit-equality-decomposition} that has a range of no more than four distinct values \cite{Uhlmann-1998a}.  

When Charlie's system is a qubit, it is instructive to compare Eq. \eqref{Eq:three-qubit-equality-decomposition} with other three-qubit decompositions known in the literature.  Wootters has shown that every two-qubit mixed state $\rho^{AB}$ admits a pure-state decomposition with each state having the same concurrence \cite{Wootters-1998a}.  Hence every three-qubit state $\ket{\psi}$ can be written as
\begin{align}
\label{Eq:three-qubit-equality-decomposition-wootters}
    \ket{\psi}^{ABC}\overset{\mathrm{LU}}{\simeq} \sum_{x}\sqrt{p_x} (U_x^A\otimes V_x^B)\ket{\lambda}^{AB}\ket{x}^{C}
\end{align}
where $\ket{\lambda}=\sqrt{\lambda}\ket{00}+\sqrt{1-\lambda}\ket{11}$ and the local unitaries are unrestricted. De Vicente \textit{et al.} later showed \cite{deVicente-2012a} that for three-qubit pure states, a decomposition always exists of the form
\begin{align}
\label{Eq:three-qubit-equality-decomposition-deVicente}
    \ket{\psi}^{ABC}\overset{\mathrm{LU}}{\simeq} \frac{1}{\sqrt{2}}\left(\ket{\lambda}^{AB}\ket{1}^C+U^A\otimes V^B\ket{\lambda}^{AB}\ket{2}^C\right),
\end{align}
where again the local unitaries are unrestricted. The crucial difference between our Eq. \eqref{Eq:three-qubit-equality-decomposition} and these two decompositions is that $\lambda=\lambda_{\min}(\psi^A)$ and the unitaries $U_x^A$ are required to commute with $\tr_B\op{\lambda}{\lambda}^{AB}$.  Only a special subset of states $\ket{\psi}$ will admit such a decomposition, and so most states satisfy $E^{(a)}(\psi)< \min\{E^{A|BC}(\psi),E^{B|AC}(\psi)\}$.  In contrast, Theorem \ref{Thm:1} shows that $E_2^{(a)}(\psi)= \min\{E_2^{A|BC}(\psi),E_2^{B|AC}(\psi)\}$ for all three-qubit states. The reason that we have universality for $E_2$ is that it is an entanglement monotone not based on a strictly concave eigenvalue function, and in fact, inequality (a) in \eqref{Eq:chain-ineq} will always be tight since $f$ is linear in $\lambda_{\min}$ for $E_2$.

As a final remark, one might wonder whether or not Theorem \ref{thm:2} also applies to the entanglement of collaboration.  That is, whether $E^{(c)}(\psi)=E^{A|BC}(\psi)$ if and only if $\ket{\psi}$ has the form of Eq. \eqref{Eq:three-qubit-equality-decomposition}.  In fact it does.  To see this,  suppose that $E^{(c)}(\psi)=E^{A|BC}(\psi)$ and consider a general multi-outcome LOCC transformation $\ket{\psi}^{ABC}\to\ket{\varphi_i}$ that attains an average post-measurement entanglement of $E^{A|BC}(\psi)$ (a similar argument holds for average post-measurement entanglement of $E^{B|AC}(\psi)$).  Following the reasoning of Theorem \ref{thm:2}, suppose that Bob performs a measurement that induces the transformation $\ket{\psi}^{ABC}\to \ket{\psi_x}^{ABC}$, with $\ket{\psi_x}^{ABC}$ having probability $p_x$.  Note that the $A|BC$ entanglement cannot decrease on average.  Then by strict concavity of $f$, we must have that $\lambda_{\min}(\psi_x^A)$ is the same for every $\psi_x^{ABC}$.  Hence Eq. \eqref{Eq:lambda-min-eq} holds and the conclusion of Proposition \ref{Prop:lambda-min-commuting} implies that $\psi^A=\psi_x^A$ for all $x$.  In other words, if Bob's measurement is described by Kraus operators $\{M_x\}_x$, then we have that
    \[\tr_B[\mbb{I}^A\otimes(\mbb{I}-p_x^{-1}M_x^\dagger M_x)^B\psi^{AB}]=0 \]
for all outcomes $x$ with nonzero probability.  This is equivalent to the condition
\begin{equation}
\label{Eq:vanishing-trace}
   \sqrt{\psi^A}(\mbb{I}-p_x^{-1}M_x^* M_x^T)\sqrt{\psi^A}=0.
\end{equation}
Since the rank of $\psi^A$ is two (or else $\ket{\psi}^{ABC}$ would be a product state), $\psi^A$ is invertible and we thus obtain $\mbb{I}=p_x^{-1}M_x^\dagger M_x$, which means that Bob's measurement is the application of a random unitary.  The same reasoning applies for a measurement by Alice when interpreting $\ket{\psi}^{A(BC)}$ as a two-qubit state; i.e. $\psi^{BC}$ can be expressed as a $2\times 2$ positive matrix in the Schmidt basis of $\ket{\psi}^{A(BC)}$.  In summary, a general LOCC protocol that converts $\ket{\psi}^{ABC}\to\ket{\varphi_i}$ can be replaced by one in which Alice and Bob only apply random local unitaries.  Since these local unitaries can always be delayed until the end of the protocol, we conclude that
\begin{align}
    E^{(c)}(\psi)&=\min\{E^{A|BC}(\psi),E^{B|AC}(\psi)\} \notag\\
    \Rightarrow E^{(a)}(\psi)&=E^{(c)}(\psi).
\end{align}
The argument given here is a variation of the one given in Ref. \cite{Bennett-2000a} for entropy-preserving LOCC transformations.

Finally, we close by a general observation that follows from the analysis carried out in Theorem \ref{Thm:1}.  We suspect this might have independent interest in the study of three-qubit networks. If one labels the vertices of a graph by the bipartite pure state concurrences of $\ket{\psi}^{ABC}$, and the edges by the mixed state concurrences as shown in Fig. \ref{fig:symmetry}, a symmetry relationship follows which is made precise in the following corollary.

\begin{figure}[h]
\includegraphics{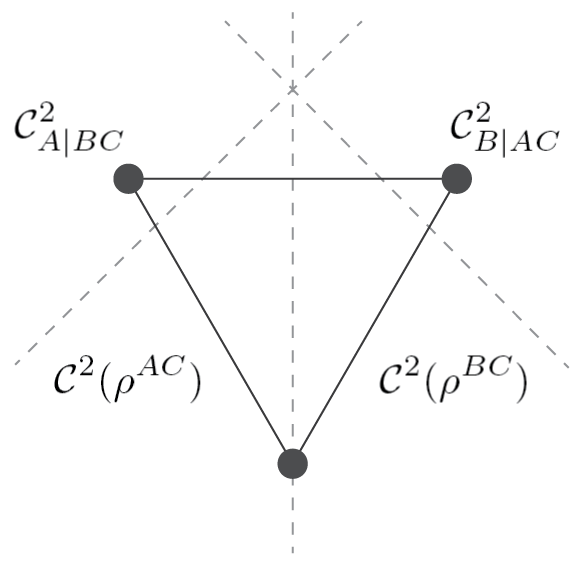}
\caption{Symmetry of entanglement across bipartite ``cuts" implies symmetry of pairwise entanglement. For almost every state, it also implies symmetry under exchange of the respective parties.}
\label{fig:symmetry}
\end{figure}

\begin{corollary}
\label{Cor:final}
For any three-qubit pure state $\ket{\psi}^{ABC}$ such that $\rho_k^A\neq\frac{1}{2}\mbb{I}$ for $k=1,2$, the following are equivalent.
\begin{equation}
    \begin{aligned}
    \text{(i)}&\quad E_2^{A|BC}=E_2^{B|AC}\\
    \text{(ii)}&\quad \ket{\psi}^{ABC}\overset{\mathrm{LU}}{\simeq} \ket{\psi}^{BAC}\\
    \text{(iii)}&\quad \mc{C}(\rho^{AC})=\mc{C}(\rho^{BC})
    \end{aligned}
\end{equation}
where $\mc{C}(\rho)$ is the two-qubit mixed state concurrence \cite{PhysRevA.54.3824}. 
\end{corollary}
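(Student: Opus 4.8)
The plan is to prove the three equivalences through the chain (ii) $\Rightarrow$ (i) $\Leftrightarrow$ (iii) together with the converse (i) $\Rightarrow$ (ii), with this last implication carrying essentially all of the work and being the only place where the nondegeneracy hypothesis $\rho_k^A \neq \frac{1}{2}\mbb{I}$ is actually needed.

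First I would dispatch (i) $\Leftrightarrow$ (iii), which in fact holds for every three-qubit pure state. Writing $E_2^{A|BC} = 1 - |\mbf{R}|$ and $E_2^{B|AC} = 1 - |\mbf{S}|$ as in the proof of Theorem \ref{Thm:1}, the pure-state bipartite concurrences across the two cuts are $\mc{C}_{A|BC} = \sqrt{1 - |\mbf{R}|^2}$ and $\mc{C}_{B|AC} = \sqrt{1-|\mbf{S}|^2}$. Since $t \mapsto 1-t$ and $t\mapsto 1-t^2$ are both strictly decreasing on $[0,1]$, condition (i) is equivalent to $|\mbf{R}| = |\mbf{S}|$ and hence to $\mc{C}_{A|BC} = \mc{C}_{B|AC}$. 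The Coffman–Kundu–Wootters monogamy relations $\mc{C}_{A|BC}^2 = \mc{C}^2(\rho^{AB}) + \mc{C}^2(\rho^{AC}) + \tau$ and $\mc{C}_{B|AC}^2 = \mc{C}^2(\rho^{AB}) + \mc{C}^2(\rho^{BC}) + \tau$, whose residual three-tangle $\tau$ is invariant under permuting the parties, give upon subtraction $\mc{C}_{A|BC}^2 - \mc{C}_{B|AC}^2 = \mc{C}^2(\rho^{AC}) - \mc{C}^2(\rho^{BC})$, so that $\mc{C}_{A|BC} = \mc{C}_{B|AC}$ holds precisely when $\mc{C}(\rho^{AC}) = \mc{C}(\rho^{BC})$, which is (iii). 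Note this argument uses no nondegeneracy assumption.

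The implication (ii) $\Rightarrow$ (i) is immediate: if $\ket{\psi}^{ABC} \overset{\mathrm{LU}}{\simeq} \ket{\psi}^{BAC}$ then, since $E_2$ is LU-invariant and exchanging the first two labels interchanges the $A|BC$ and $B|AC$ cuts, we get $E_2^{A|BC} = E_2^{B|AC}$. For the remaining implication (i) $\Rightarrow$ (ii) I would reuse the Bloch-vector machinery of Theorem \ref{Thm:1}. Taking the decomposition $\ket{\psi} = \sum_{k=1}^2 \ket{\psi_k}^{AB}\ket{k}^C$ of Proposition \ref{Prop:Gour}, the hypothesis $\rho_k^A\neq\frac{1}{2}\mbb{I}$ means the A-Bloch vectors $\mbf{r}_k$ are nonzero, and Eq. \eqref{Eq:Bloch-vector-commutator-A} forces $\mbf{r}_1,\mbf{r}_2$ to be collinear. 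I would then argue that condition (i), i.e. $|\mbf{R}| = |\mbf{S}|$, pins down the relative orientation of the B-Bloch vectors $\mbf{r}_1',\mbf{r}_2'$ (recall $|\mbf{r}_k'|=|\mbf{r}_k|$): if $\mbf{r}_1\uparrow\uparrow\mbf{r}_2$ then $|\mbf{R}|$ attains its maximal value $p_1|\mbf{r}_1|+p_2|\mbf{r}_2|$ and the triangle inequality $|\mbf{S}|\leq p_1|\mbf{r}_1'|+p_2|\mbf{r}_2'| = |\mbf{R}|$ is saturated only when $\mbf{r}_1'\uparrow\uparrow\mbf{r}_2'$; and if $\mbf{r}_1\uparrow\downarrow\mbf{r}_2$ then the reversed triangle inequality \eqref{Eq:Block-triangle-ineq} is saturated only when $\mbf{r}_1'\uparrow\downarrow\mbf{r}_2'$. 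In either case $\mbf{r}_k$ and $\mbf{r}_k'$ realize the \emph{same} parallel/anti-parallel pattern, so both $\ket{\psi_1}$ and $\ket{\psi_2}$ are Schmidt-diagonal in a common (nondegenerate, hence unique) basis on $A$ and on $B$, with matching orientation.

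Matching orientation is exactly what rules out the asymmetric alternative. As in the proof of Theorem \ref{Thm:1}, two states commuting on both sides are each of $(\ket{00},\ket{11})$ type or $(\ket{01},\ket{10})$ type in the common bases; the mixed combination is precisely the non-symmetric form $(a\ket{00}+b\ket{11})\ket{0}+(c\ket{01}+d\ket{10})\ket{1}$, and one checks it produces opposite orientations ($\mbf{r}_1\uparrow\uparrow\mbf{r}_2$ but $\mbf{r}_1'\uparrow\downarrow\mbf{r}_2'$), contradicting the matched pattern just established; equivalently, condition (i) would force $|a|=|b|$ or $|c|=|d|$, violating nondegeneracy. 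Hence both states share a type and $\ket{\psi}$ is LU-equivalent to the swap-symmetric form of Eq. \eqref{Eq:Lemma_1_Form}, namely $\sqrt{p}\ket{00}\ket{\eta_0}+\sqrt{1-p}\ket{11}\ket{\eta_1}$. Because $\ket{00}$ and $\ket{11}$ are fixed by the exchange of $A$ and $B$, this state literally equals its own $A\!\leftrightarrow\!B$ swap, giving $\ket{\psi}^{ABC}\overset{\mathrm{LU}}{\simeq}\ket{\psi}^{BAC}$ and completing (i) $\Rightarrow$ (ii). I expect the main obstacle to be this last implication, and within it the bookkeeping that upgrades ``the two Bloch-vector pairs share an axis'' to ``they share an orientation,'' since that is what excludes the GHZ-like asymmetric state and makes essential use of the hypothesis $\rho_k^A\neq\frac{1}{2}\mbb{I}$.
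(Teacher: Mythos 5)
Your proof is correct and follows essentially the same route as the paper's: the CKW monogamy relations for (i) $\Leftrightarrow$ (iii), and the Bloch-vector/triangle-inequality analysis from Theorem \ref{Thm:1} to force the swap-symmetric form of Eq. \eqref{Eq:Lemma_1_Form} for (i) $\Rightarrow$ (ii). Your treatment is somewhat more explicit than the paper's (which simply cites ``the analysis in Theorem \ref{Thm:1}''), and your observation that (i) is equivalent to the two Bloch-vector pairs sharing an \emph{orientation}, not merely an axis, is a careful refinement of the paper's phrasing in terms of vanishing commutators.
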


\begin{proof}
If Charlie is decoupled from Alice and Bob, then Alice and Bob can always swap their state via LU's, which follows from Schmidt decomposition. If Charlie is not decoupled, we can see from the analysis in Theorem \ref{Thm:1} that $E_2^{A|BC}=E_2^{B|AC}$ if and only if both $\lbrack \rho_1^A, \rho_2^A \rbrack$ and $\lbrack \rho_1^B, \rho_2^B \rbrack$ vanish. The additional restriction that none of these reduced density matrices is $\frac{1}{2}\mbb{I}$ implies that $\ket{\psi}^{ABC}$ has the form of Eq. \eqref{Eq:Lemma_1_Form}, which is symmetric under exchange of Alice and Bob. Therefore (i) implies (ii), and the converse is trivial. Equivalence of (i) and (iii) follows from the CKW monogamy relations \cite{Coffman-2000a}
\begin{align}\label{ckw}
    \mc{C}^2_{A|BC} &= \tau_{ABC} + \mc{C}^2(\rho^{AB})+\mc{C}^2(\rho^{AC})\\
    \mc{C}^2_{B|AC} &= \tau_{BAC} + \mc{C}^2(\rho^{BA})+\mc{C}^2(\rho^{BC}).
\end{align}
In particular, the three-tangle $\tau$ and the mixed state concurrence are symmetric under exchange of parties, so taking the difference of the above equations gives
\begin{align}
    \mc{C}^2(\rho^{AC})-\mc{C}^2(\rho^{BC})= \mc{C}^2_{A|BC}-\mc{C}^2_{B|AC}.
\end{align}
The pure state concurrence is however in one-to-one correspondence with $\lambda_{\text{min}}$, so equivalence of (i) and (iii) follows.
\end{proof}

We note here that the restriction $\rho_k^A\neq\frac{1}{2}\mbb{I}$ is not actually very restrictive because when choosing three-qubit states at random, $\rho_k^A=\frac{1}{2}\mbb{I}$ occurs with probability zero. Nonetheless, there are certain states with $\rho_k^A=\frac{\mbb{I}}{2}$ for some $k$ that can be included in the statement of Corollary \ref{Cor:final}. For example, any state of the form 
\begin{equation}
    \ket{\psi}^{ABC}\overset{\mathrm{LU}}{\simeq}\sqrt{p}\ket{\Phi^+}\ket{0}+\sqrt{1-p}\ket{\varphi}\ket{1},
\end{equation}
with $\ket{\varphi}^{BA}=(U\otimes U^*)\ket{\varphi}^{AB}$ for some unitary $U$ will satisfy the conclusion of Corollary \ref{Cor:final}.
\section{Conclusion and Discussion}

In Theorem \ref{Thm:1} we have established for three-qubit pure states that imposing locality constraints on a helper system does not decrease the probability of obtaining a maximally entangled state $\ket{\Phi^+}$ between any two fixed parties. One may wonder whether this holds for tripartite systems in other dimensions.  It can already be seen that it does not when Alice and Bob have qutrit ($d=3$) systems.  Consider density matrices $\rho^{AB}$ in which $\rho^A$ and $\rho^B$ are both maximally mixed; i.e. $\frac{\mbb{I}}{3}$.  Then $\rho^{AB}$ is the Choi matrix for a qutrit unital channel; however, it is known that not every qutrit unital channel can be expressed as a probabilistic mixture of unitary operations \cite{Landau-1993a}.  This means that it is not possible to find a pure-state decomposition of $\rho^{AB}$ consisting of only maximally entangled states, which would be required if Ineq. \eqref{Eq:EoA-bipartite-bound} were tight.  For general $2\otimes 2\otimes n$ states with $n>2$, we leave it as an open problem to decide whether $E^{(a)}_2=\min\{E_2^{A|BC}, E_2^{B|AC}\}$, but we conjecture it to be true.


Note that Theorem \ref{Thm:1} can be phrased entirely in terms of the two-qubit density matrix as follows: every rank-two, two-qubit density matrix $\rho^{AB}$ satisfies
\begin{equation}
\label{Eq:EoA-density}
    \wt{E}^{(a)}_2(\rho^{AB})=2\min\{\lambda_{\min}(\rho^A),\lambda_{\min}(\rho^B)\}.
\end{equation}
Interestingly this relationship also applies for any function that varies linearly in $\lambda_{\min}$ over the interval $(0,\frac{1}{2}]$. As an extreme case, consider the monotone 
\begin{equation}
   S_0(\lambda) = \begin{cases}
   0 & \text{if}\quad \lambda=0 \\
   1 & \text{if}\quad 0<\lambda\leq \frac{1}{2}, \end{cases}
\end{equation}
which is the $\alpha$-entropy of entanglement for $\alpha=0$. This function also satisfies Eq. \eqref{Eq:EoA-density}.  To see this, observe that it trivially holds for any $\rho^{AB}$ in which either reduced state $\rho^{A}$ or $\rho^B$ is pure.  On the other hand, if both $\rho^{A}$ and $\rho^B$ are genuinely mixed, then it is always possible to find a pure-state decomposition for $\rho^{AB}$ consisting of entangled states.  This follows from the fact that in such a case the set of product states in the support of $\rho^{AB}$ is measure zero.  We give a rigorous argument of this in the appendix.

\section*{Acknowledgment}

This work was supported by NSF Award No. 1914440.

\section{Appendix}

\subsection{Alternative Proof of Prop. \ref{Prop:lambda-min-commuting}}

Sufficiency is obvious. To prove necessity, we note that for Hermitian operators $A$ and $B$, $\lambda_{\text{max}}(A+B)=\lambda_{\text{max}}(A) + \lambda_{\text{max}}(B)$ if and only if $A$, $B$, and $A+B$ share a principle eigenvector which has the maximal eigenvalue for $A$, $B$, and $A+B$ \cite{Bhatia-1997a}. We can extend this by induction to a finite set of operators, taking them to be the $p_x U_x H U_x^{\dagger}$. Because these are qubit operators and $\tr{(\sum_x^t p_x U_x H U_x^\dagger)}=\tr{(H)}$, we can replace max with min in the above fact. Therefore suppose that $\lambda_{\min}\left(\sum_x^t p_x U_x H U_x^\dagger\right)=\lambda_{\min}(H)$.  Then there is a non-zero vector $\ket{h}$ such that $\forall\ x$,
\begin{equation}
\begin{aligned}
    H\ket{h}&=\lambda_{\max}(H)\ket{h}\\
    U_xHU_x^{\dagger}\ket{h}&=\lambda_{\max}(H)\ket{h},
\end{aligned}
\end{equation}
where the first line follows from the assumption that $U_0=\mbb{I}$.  These jointly imply that
$[H,U_x^{\dagger}] \ket{h}=0.$ Let $\ket{h^\perp}$ be a vector orthogonal to $\ket{h}$, which is uniquely determined up to an overall phase.  Then, $\ket{h^\perp}$ is also an eigenvector of $H$ and the $U_xHU_x^{\dagger}$ (with eigenvalue possibly zero), and for the same reason as above, $[H,U_x^{\dagger}] \ket{h^\perp}=0.$  Hence
\begin{equation}
     [H,U_x^{\dagger}] (\alpha\ket{h}+\beta\ket{h^\perp})=0
\end{equation}
for arbitrary $\alpha,\beta$, and so we have $[H,U_x]=0$ for all $x$.

\subsection{Decomposition of Certain Density Matrices into Entangled States}

Consider a two-qubit separable state $\rho^{AB}=\sum_{k=1}^4 \op{\alpha_k}{\alpha_k}^A\otimes\op{\beta_k}{\beta_k}^B$.  Suppose that both $\rho^A$ are $\rho^B$ are genuinely mixed.  Then we can always find a pair $\ket{\alpha_{k_1}}\ket{\beta_{k_1}}$ and  $\ket{\alpha_{k_2}}\ket{\beta_{k_2}}$ such that $\{\ket{\alpha_{k_1}},\ket{\alpha_{k_2}}\}$ and $\{\ket{\beta_{k_1}},\ket{\beta_{k_2}}\}$ span two-dimensional spaces.  This means that the two-dimensional subspace of $\mbb{C}^2\otimes\mbb{C}^2$ subspace spanned by $\{\ket{\alpha_{k_1}}\ket{\beta_{k_1}},\ket{\alpha_{k_2}}\ket{\beta_{k_2}}\}$ contains at most two product states \cite{Sanpera-1998a}.  Hence, we can always find a $2\times 2$ unitary matrix with components $u_{ij}$ such that $u_{11}\ket{\alpha_{k_1}}\ket{\beta_{k_1}}+u_{12}\ket{\alpha_{k_2}}\ket{\beta_{k_2}}$ and $u_{21}\ket{\alpha_{k_1}}\ket{\beta_{k_1}}+u_{12}\ket{\alpha_{k_2}}\ket{\beta_{k_2}}$ are both entangled states.  We can thus replace product states $k_1$ and $k_2$ in the decomposition of $\rho^{AB}$ with these entangled states.  We then choose a remaining product state $\ket{\alpha_{k_3}}\ket{\beta_{k_3}}$ in the ensemble and mix it with one of the entangled states by another $2\times 2$ unitary matrix so to generate two new entangled states.  This allows us to replace $\ket{\alpha_{k_3}}\ket{\beta_{k_3}}$ with an entangled state in the decomposition of $\rho^{AB}$.  Proceeding in this way allows us to eliminate all product states in the decomposition.  Note, this argument also applies for separable states in arbitrary dimensions since a generic two dimensional subspace in $\mbb{C}^{d_A}\otimes\mbb{C}^{d_B}$ will always contain a finite number of product states.

\bibliography{EoAbib.bib}

\end{document}